\newtheorem{lemma}{Lemma}
\newtheorem{theorem}{Theorem}
\newtheorem{proposition}{Proposition}
\newcommand{\eps}{\varepsilon}
\def\todo#1\par{\vskip2ex\par\noindent\hskip-2em\rule{1ex}{1ex}\hskip3em\framebox{\parbox{.9\textwidth}{#1}}\bigskip\par}
\tikzstyle{timeline}=[very thin,color=lightgray]
\tikzstyle{msg}=[->]
\tikzstyle{every picture}+=[remember picture]
\newcommand{\FD}[1]{\mathcal{#1}}
\renewcommand{\geq}{\geqslant}
\renewcommand{\geq}{\geqslant}
\renewcommand{\le}{\leqslant}
\renewcommand{\geq}{\geqslant}
\renewcommand{\ge}{\geqslant}
\newcommand{\ASYNC}{\mathcal{M}^{\text{async}}}
\newcommand{\N}{\mathbb{N}}
\newcommand{\M}{\mathcal{M}}
\newcommand{\Mpar}{\mathcal{M}^{\text{GST}}}
\newcommand{\Time}{\mathbb{T}}
\newcommand{\D}{\mathcal{D}}
\author{Martin Biely}
\affiliation{%
  \institution{EPFL}
  \city{Lausanne}
  \country{Switzerland}
}
\email{martin@biely.eu}
\author{Peter Robinson}
\affiliation{%
  \institution{McMaster University}
  \city{Hamilton}
  \country{Canada}
}
\email{peter.robinson@mcmaster.ca}
\title{On the Hardness of the Strongly Dependent Decision Problem}
\begin{document}

\copyrightyear{2019} 
\acmYear{2019} 
\setcopyright{acmcopyright}
\acmConference[ICDCN '19]{International Conference on Distributed Computing and Networking}{January 4--7, 2019}{Bangalore, India}
\acmBooktitle{International Conference on Distributed Computing and Networking (ICDCN '19), January 4--7, 2019, Bangalore, India}
\acmPrice{15.00}
\acmDOI{10.1145/3288599.3288614}
\acmISBN{978-1-4503-6094-4/19/01}

\begin{abstract}
We present necessary and sufficient conditions for solving the strongly dependent decision (SDD) problem in various distributed systems.
Our main contribution is a novel characterization of the SDD problem based on point-set topology.
For partially synchronous systems, we show that any algorithm that solves the
SDD problem induces a set of executions that is closed with respect to the point-set topology.
We also show that the SDD problem is not solvable in the asynchronous system
augmented with any arbitrarily strong failure detectors.
\end{abstract}
\maketitle

\section{Introduction}  \label{sec:intro} \label{sec:sdd}
The Strongly Dependent Decision Problem (SDD) problem was introduced
     in \cite{CGS00}. 
Like many classic distributed  agreement problems (cf.\ \cite{Lyn96}),
     it belongs to the class of decision tasks. 
In this work, we study the hardness of the problem from the viewpoint of point-set topology and also shed some light on the SDD problem with respect to the power of failure detectors. There are several previous works that have applied algebraic topology and point-set topology to problems in distributed computing, e.g., see \cite{topbook,topbook2,nowak,abc} and the references therein.  

\paragraph{Problem Definition.}
We consider two processes $s$ and $d$. 
Process $s$ (the source) starts with an input value taken from the set
     $\{0,1\}$. 
The problem is for $d$ (the destination) to eventually output a
     decision value from the set $\{0,1\}$, such that the following
     three conditions hold:
\begin{itemize}
\item {\em Integrity}\/:\ Process $d$ decides at most once.
\item {\em Validity}\/:\ If $s$ has not initially crashed, $d$ decides
  $s$'s initial value.
\item {\em Termination}\/:\ If $d$ is correct, then $d$ eventually
     decides.
\end{itemize}

\subsection{System Model} \label{sec:model}

We now formally define our basic system model.
We consider a system of $2$ processes $s$ and $d$ that communicate
via message passing, using messages taken from some (possibly infinite) 
universe.
Every process executes an instance of a \emph{distributed algorithm} that is
modeled as a deterministic state machine, which has a local state that initially
contains the input value of the SDD problem.
A \emph{step} of a process $p$ is a state transition of the state machine that
is guided by a transition relation, which atomically takes the current local
state of $p$, a (possibly empty) subset of messages from $p$'s current message
buffer, and, in case of failure detectors, a value from the
failure detector's  domain, and yields a new local state.
Moreover, a deterministic message sending function computes a possibly empty set
of messages that are to be sent to the other processes, which are placed in the
respective message buffers.
A step can either be a \emph{send step} if a message is sent, a \emph{receive step} if some messages are received, a \emph{send-receive step} if both happens, or a
\emph{local step} if no message is sent or received.
In the absence of failure detectors, we say that a step
$\sigma$ of process $p$ is \emph{trivial}, if $p$'s local state
(comprising memory and message buffers) is unchanged due to $\sigma$;
otherwise we call $\sigma$ \emph{non-trivial}.

A \emph{configuration} of the system consists of the vector of local
states and the message buffers of all the processes; in the initial
configuration, all processes are in an initial state and the
message buffers are empty.
An \emph{execution} 
\[
\rho=(C_0,C_1,\dots)
\]
 is an infinite sequence of 
configurations that starts from an initial configuration $C_0$, and, for $i \ge 
0$, $C_{i+1}$ results
from a step of a single process in configuration $C_i$.
Note that if the $i$-th step ($i\ge 1$) is trivial, then $C_{i-1}=C_i$.

The above basic model is strengthened by restricting the set
of executions by some \emph{admissibility} conditions that depend on the
particular system model used. For example, the classic asynchronous model 
(cf.\ \cite{FLP85}), denoted as $\ASYNC$, requires that
every correct process takes an infinite number of steps,
faulty processes execute only finitely many steps,
and every message sent by a process to a correct receiver process 
  is eventually received.
Similarly to the asynchronous model of \cite{CT96}, we assume that
     processes take steps according to some discrete timebase
     $\mathbb{T}$, which corresponds to the non-negative integers.
Consider an execution $\alpha=(C_0,\dots)$.
We say that $k$ is the \emph{decision time} of $d$, if process $d$ has 
decided in $C_k$ and has not yet decided in $C_{k-1}$, and we call $C_k$ a 
\emph{deciding configuration}.

\section{A Necessary Condition in Partially Synchronous Systems}

In this section we consider variants of the partially synchronous system 
model (cf.\ \cite{DLS88,DDS87}), which strengthen the classic asynchronous 
model (cf.\ Section~\ref{sec:model}) by assuming additional guarantees on
process step times and message delivery.
In the spirit of \cite{DLS88,DDS87} and in contrast to 
Section~\ref{sec:fd}, here we assume that processes do \emph{not} have 
access to failure detectors.

A partially synchronous {model} $\M$ corresponds to a (sub)set of
executions in $\ASYNC$, which are exactly the executions that satisfy the
modeling assumptions of $\M$.
By a slight abuse of notation, we use $\M$ to refer to the admissible executions
and the model itself.
In our analysis, we utilize the framework based on point-set topology that was first introduced in \cite{AS84}.

Let $\alpha$ and $\beta$ be executions (i.e. sequences of configurations, cf.
Section~\ref{sec:model}) in $\ASYNC$.
We define a function 
$d:\M\times\M\rightarrow\mathbb{R}$ as
  
\[
d(\alpha,\beta)~:=~2^{-N}
\]
 where $N$ is the first index 
  where the configurations of $\alpha$ and $\beta$ differ, and $d(\alpha,\beta)~:=~0$ if $\alpha=\beta$.
\begin{lemma} 
  Function $d$ is a metric on $\ASYNC$.
\end{lemma}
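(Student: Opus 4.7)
The plan is to verify the four standard axioms of a metric: non-negativity, identity of indiscernibles, symmetry, and the triangle inequality. The first three are immediate from the definition, so the real content lies in the triangle inequality, which I would establish in the stronger form of an ultrametric inequality, since this is the natural shape of the ``first-difference'' distance.

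First I would handle the easy axioms. Non-negativity is immediate because $2^{-N} > 0$ for any $N \in \mathbb{N}$ and the value $0$ is only taken when $\alpha=\beta$. Identity of indiscernibles is by construction: if $\alpha \neq \beta$, then the set of indices $i$ with $C_i^\alpha \neq C_i^\beta$ is a nonempty subset of $\mathbb{N}$ and hence has a least element $N$, giving $d(\alpha,\beta) = 2^{-N} > 0$; the converse is by definition. Symmetry follows because the notion of ``first index at which $\alpha$ and $\beta$ differ'' is symmetric in the two arguments.

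The main step is the triangle inequality, where I would actually prove the stronger ultrametric inequality
\[
d(\alpha,\gamma) \;\leq\; \max\bigl(d(\alpha,\beta),\,d(\beta,\gamma)\bigr),
\]
from which the triangle inequality follows since $\max(x,y) \leq x+y$ for $x,y \geq 0$. The argument is: if $\alpha=\gamma$ the inequality is trivial, so assume $\alpha\neq\gamma$ and let $N_{\alpha\gamma}$, $N_{\alpha\beta}$, $N_{\beta\gamma}$ denote the first-difference indices (setting $N = \infty$ formally when two executions coincide, so that $2^{-\infty}=0$). Set $M := \min(N_{\alpha\beta}, N_{\beta\gamma})$. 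For every $i < M$ we have $C_i^\alpha = C_i^\beta$ and $C_i^\beta = C_i^\gamma$, hence $C_i^\alpha = C_i^\gamma$. Therefore $N_{\alpha\gamma} \geq M$, which yields $d(\alpha,\gamma) = 2^{-N_{\alpha\gamma}} \leq 2^{-M} = \max(d(\alpha,\beta), d(\beta,\gamma))$.

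I do not expect a substantive obstacle here; the only mild subtlety is the bookkeeping when one or more of the three executions coincide (so that some $N$ is ``infinite''), which is handled cleanly by the convention $2^{-\infty}=0$ or by treating those cases separately. Conceptually the proof is standard: it recognizes $d$ as the canonical ultrametric on the space of infinite sequences over the alphabet of configurations, restricted to the subset $\ASYNC$.
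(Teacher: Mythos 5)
Your proof is correct and rests on the same common-prefix argument as the paper's, which also verifies symmetry and non-negativity directly and then derives the triangle inequality by comparing the first-difference indices of $\alpha,\gamma$ and $\gamma,\beta$. Your ultrametric packaging $d(\alpha,\gamma)\le\max\bigl(d(\alpha,\beta),d(\beta,\gamma)\bigr)$ is in fact slightly cleaner: the paper's case analysis asserts the equality $d(\alpha,\beta)=d(\alpha,\gamma)$, which can fail in the edge case $n_1=n_2$ (where $\alpha$ and $\beta$ may still agree at index $n_1$), whereas your inequality-only formulation handles that case uniformly and still yields the triangle inequality.
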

\begin{proof}
By definition, $d$ is nonnegative and $\forall \alpha,\beta \in \ASYNC$ we have
$d(\alpha,\beta)=d(\beta,\alpha)$.
For $\alpha,\beta,\gamma \in \ASYNC$, the triangle-inequality 
\[
d(\alpha,\beta)
\le d(\alpha,\gamma) + d(\gamma,\beta)
\] 
trivially holds if $\gamma=\alpha$ or
$\gamma=\beta$.
Now consider the case that 
\[
d(\alpha,\gamma) \ge d(\gamma,\beta) > 0. 
\]
This means that, for some indices $n_1\le n_2$, it holds that
\begin{align*}
d(\alpha,\gamma)&=2^{-n_1}, \\
d(\gamma,\beta)&=2^{-n_2}.
\end{align*}
Since $\gamma$ shares a common prefix of length $n_2-1$ with $\beta$ but only a
prefix of length $n_1-1$ with $\alpha$, it follows that $\alpha$ and
$\beta$ differ at index $n_1$, and thus 
\[
d(\alpha,\beta) = d(\alpha,\gamma)
\]
 and
the triangle-inequality follows.
The case where $0<d(\alpha,\gamma) < d(\gamma,\beta)$ follows analogously.
\end{proof}

It is well known that a metric induces a topology (e.g., \cite[page~119]{Munkres}) where the $\eps$-balls  defined as \[
B_\epsilon(\alpha) = \{
\beta \in \ASYNC \mid d(\alpha,\beta) < \epsilon \}
\]
 are the basic open sets.
We first recall some basic definitions from point-set topology that we use below; we refer the reader to \cite{Munkres} for details.
A set is defined to be \emph{closed} if and only if its complement is open.
Moreover, a subset $X$ of the topological space $\ASYNC$ is called \emph{dense} (in $\ASYNC$) if every execution $\alpha \in \ASYNC$ either belongs to $X$ or is a limit point of $X$; in other words, for any $\epsilon>0$, $B_\epsilon(\alpha) \cap X \ne \emptyset$.

\begin{proposition}[e.g., \cite{Munkres}]\label{prop:union}
The union of any (possibly infinite) collection of open sets is open.
The finite intersection of a collection of closed sets is closed. 
\end{proposition}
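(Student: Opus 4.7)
The plan is to prove both claims directly from the definition of the metric topology on $\ASYNC$ induced by $d$, using only that the $\epsilon$-balls $B_\epsilon(\alpha)$ form a basis of open sets, together with the definition of closed as ``complement of open.''

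For the first statement, I would take an arbitrary collection $\{U_i\}_{i \in I}$ of open subsets of $\ASYNC$ and set $U = \bigcup_{i \in I} U_i$. Fixing any $\alpha \in U$, there exists some $i_0 \in I$ with $\alpha \in U_{i_0}$, and openness of $U_{i_0}$ yields an $\epsilon > 0$ with $B_\epsilon(\alpha) \subseteq U_{i_0} \subseteq U$. Since every point of $U$ lies inside a basic $\epsilon$-ball that is itself contained in $U$, the set $U$ is open.

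For the second statement, I would reduce to the first via De Morgan's laws. Given closed sets $C_1,\dots,C_n$, each complement $C_i^c$ is open by the definition of closedness recalled just before the proposition. The identity
\[
\left(\bigcap_{i=1}^n C_i\right)^c = \bigcup_{i=1}^n C_i^c
\]
then exhibits the complement of $\bigcap_{i=1}^n C_i$ as a finite union of open sets, which is open by the first part. Hence $\bigcap_{i=1}^n C_i$ is closed.

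I do not expect any real obstacle here: both facts are standard results about any topological space, which is precisely why the proposition is attributed to \cite{Munkres}. The only conceptual care needed is to make sure that the ``openness'' being used (every point is interior) is compatible with the basis description via $\epsilon$-balls, but this equivalence is immediate from the way the topology on $\ASYNC$ was generated in the paragraph preceding the proposition.
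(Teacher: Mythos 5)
Your argument is correct: the arbitrary-union claim follows from the interior-point characterization of open sets in the metric topology, and the finite-intersection claim reduces to it by De Morgan, exactly as in the standard textbook treatment. The paper itself states this proposition without proof, citing Munkres, so your write-up simply supplies the canonical argument that the citation points to; there is no divergence to report.
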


We will now argue why safety properties correspond to
closed sets and liveness properties correspond to dense
sets.          We emphasize that the following correspondences were also mentioned by Alpern and
  Schneider~\cite{AS84}. However, in contrast to \cite{AS84}, we consider
  these properties in the metric space induced by $d$.
  
\begin{lemma}  
Consider the metric topology on the set of executions $\ASYNC$.  
A safety property defines a closed set, whereas a liveness property corresponds to a dense set. 
\end{lemma}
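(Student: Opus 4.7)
The plan is to reduce both correspondences to the classical Alpern--Schneider definitions: a property $P \subseteq \ASYNC$ is a \emph{safety property} if every execution $\alpha \notin P$ has a finite prefix such that no admissible extension of that prefix lies in $P$; a property $P$ is a \emph{liveness property} if every finite prefix of an admissible execution can be extended to some admissible execution in $P$. The whole proof then amounts to translating ``finite prefix'' to ``basic open $\epsilon$-ball'' via the formula $\epsilon = 2^{-n}$, exploiting the fact that by the definition of $d$, two executions share a common prefix of length $n$ if and only if $d(\alpha,\beta) \le 2^{-n}$.

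For the safety direction I would show that the complement of $P$ is open. Pick any $\alpha \notin P$; by the definition of safety, there is some index $n$ such that every admissible execution agreeing with $\alpha$ on its first $n$ configurations is also outside $P$. Setting $\epsilon = 2^{-n}$, the ball $B_\epsilon(\alpha)$ is exactly the set of admissible executions sharing this prefix, so $B_\epsilon(\alpha) \subseteq \ASYNC \setminus P$, proving $\ASYNC \setminus P$ is open. Conversely, if $P$ is closed and $\alpha \notin P$, some $B_\epsilon(\alpha)$ lies in the complement; choosing $n$ with $2^{-n} < \epsilon$ identifies a ``bad'' length-$n$ prefix of $\alpha$ whose extensions all avoid $P$, which is exactly the safety condition.

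For the liveness direction I would prove density directly. Let $P$ be a liveness property, fix $\alpha \in \ASYNC$ and $\epsilon > 0$, and choose $n$ with $2^{-n} < \epsilon$. The length-$n$ prefix of $\alpha$ is, by liveness, extendable to an admissible execution $\beta \in P$, which by construction satisfies $d(\alpha,\beta) \le 2^{-n} < \epsilon$, so $\beta \in B_\epsilon(\alpha)\cap P$; hence $P$ is dense. The converse is symmetric: given any finite admissible prefix of length $n$, pick any $\alpha \in \ASYNC$ extending it and apply density with $\epsilon = 2^{-n}$ to obtain $\beta \in P$ extending the same prefix.

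The only genuinely delicate point, and the place I would be most careful, is the interplay with admissibility: the prefix-extension used in both directions must yield an execution that actually belongs to $\ASYNC$, not just a formal sequence of configurations. For the classical asynchronous model this is routine since any finite prefix can be completed to an admissible execution (for instance by scheduling the remaining steps fairly and delivering all pending messages), but I would state this assumption explicitly, since later in the paper the same argument is reused for restricted models $\M \subseteq \ASYNC$ where extensibility is what makes the metric/topology framework applicable in the first place.
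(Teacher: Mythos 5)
Your proposal is correct and follows essentially the same route as the paper's proof: both directions translate the Alpern--Schneider prefix characterizations of safety and liveness into statements about $\eps$-balls via $\eps=2^{-n}$, showing the complement of a safety property is a union of open balls and that every ball around any execution meets a liveness property. The converse implications and the explicit admissibility caveat you add are not needed for the lemma as stated, but they do not change the argument.
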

\begin{proof}
We first show the result for safety properties.
If an execution $\alpha$ does not satisfy a safety property
  $S\subseteq\ASYNC$, i.e.\ $\alpha \notin S$, then there is an index 
  $N$ where
  all executions $\beta$ that share a prefix longer than $N$ with $\alpha$ are
  not in $S$. (This closely matches intuition, since once a 
    safety property is violated in a prefix of an execution, it makes no 
    difference how this prefix is extended.)
Formally speaking, suppose that $\alpha \notin S$. 
There exists an $N\geq 0$ such that, if some
$\beta \in \M$ has 
\[
d(\alpha,\beta) < {2^{-N}}\text{,}
\]
 i.e., $\alpha$ and 
$\beta$ share a prefix of length $\ge N$, then $\beta \notin S$.  
It follows that, for each $\alpha \notin S$, there is an $\epsilon >0$ such that the $\epsilon$-ball $B_\epsilon(\alpha)$ does not intersect with $S$.
The union of the $\epsilon$-balls of all $\alpha \notin S$ precisely contains all executions in $\ASYNC \setminus S$ and, by Proposition~\ref{prop:union}, is an open set. 
Thus, the set of executions $S$ is a {closed} set since its complement is open.  

We next consider liveness properties.
If $L$ is a liveness property then, for any execution $\alpha \in \ASYNC$ and
any finite prefix $\rho$ of $\alpha$, it is possible to extend $\rho$ yielding
an execution $\beta \in L$.
In other words, any given prefix is ``live''.
To show that a liveness property $L$ is a dense set in our metric topology, we
need to show that, for any $\eps>0$ and any $\alpha \in \ASYNC$, the basic open
set $B_\eps(\alpha)$ intersects $L$, i.e., there is an execution $\beta \in L$
such that $d(\alpha,\beta)< \eps$.
For a fixed $\alpha$ and $\eps>0$, let $n$ be the smallest integer such that
$2^{-n}\le \eps$.
Since $L$ is a liveness property, there exists a $\beta \in L$ that shares a
prefix of length $\ge n+1$ with $\alpha$, which shows that
\[
d(\alpha,\beta)<2^{-n}\le \eps
\]
 as required.
\end{proof}

We now consider some of the classic \emph{partially synchronous} models in this
context:
First, note that the synchronous model is entirely determined by safety properties and hence the executions of any algorithm in this model form a closed set.
Note that in the {partial synchrony} classification of \cite{DDS87},
the synchronous model corresponds to parameters $c=1$ (synchronous
communication) and $p=1$ (synchronous processes).
Now, consider the partially synchronous model $\Mpar$ of \cite{DLS88} where every
execution has a global stabilization time $GST$, i.e., before time $GST$
the system can be completely asynchronous but from time $GST$ on,
communication and computation become synchronous.
The executions of the consensus algorithm $A$ of \cite{DLS88} are \emph{not}
closed because the adversary determines $GST$.
In more detail, it is possible to construct a converging sequence
$(\alpha_i)_{i\ge 0}$ of executions of $A$ in this model,
such that $GST$ is strictly increasing over this sequence.
The limit of this sequence $\alpha=\lim_{i\rightarrow\infty}\alpha_i$ is
the case where $GST=\infty$.
Since $\alpha$ violates the assumption of having a finite $GST$, execution
$\alpha$ is not in the set of executions of $A$ in $\Mpar$ (but rather in
$\ASYNC\setminus\Mpar$).
In other words, the set of executions of $A$ in $\Mpar$ does not contain all limit 
points and hence is \emph{not} closed.

In order to solve the SDD problem, an algorithm needs to satisfy
{Integrity}, {Validity}, and {Termination} (cf.\
Sec.~\ref{sec:sdd}).
These properties correspond to sets of executions in
$\ASYNC$; we denote these sets by $I$, $V$, and $T$ respectively.
Clearly $I$, $V$ are closed (w.r.t.\ to the metric space on $\ASYNC$), 
whereas $T$ is a liveness property.
We consider SDD-algorithms that obey the following condition:
\begin{itemize}
\item[(C1)] Process $d$ decides at the latest upon receiving a message
  from $s$ and takes no non-trivial steps (cf.\ Section~\ref{sec:model})
  afterwards.
Moreover, process $s$ takes no non-trivial steps after sending
  a message to $d$ and sends a message to $d$ in its first step.
\end{itemize}
Any algorithm that solves the SDD problem in
the partially synchronous framework of \cite{DLS88,DDS87} can be transformed into an
algorithm satisfying (C1), by initially sending a message $m$ from $s$ to $d$,
omitting all other non-trivial steps at $s$, and omitting all non-trivial steps
at $d$ that occur after the reception of $m$ by $d$.

In terms of the topological framework, we say that an algorithm $A$ \emph{solves
the SDD problem in a model}, if the set of executions $\M$ of $A$ in this model
satisfies 
\[
\mathcal{M}\subseteq I\cap V\cap T,
\]
 i.e., every execution of $A$ in
the model satisfies the three properties of the SDD problem.

\begin{lemma} \label{lem:bound}
  Let $A$ be an algorithm that adheres to (C1) and solves the SDD
  problem in some model and let $\M \subseteq \ASYNC$ be the
  (corresponding) set of executions of $A$.
  Suppose that $\mathcal{M}$ is not closed.
If process $s$ is initially alive,
  then there is no upper bound on the decision time of process $d$,
  independently of whether $s$ starts with $0$ or $1$.
\end{lemma}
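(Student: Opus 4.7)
The plan is to argue by contraposition: assume that the decision time of $d$ is uniformly bounded by some $k\in\mathbb{N}$ across $\mathcal{M}$-executions where $s$ is initially alive with some fixed input $v$, and then show that $\mathcal{M}$ must be closed, contradicting the hypothesis. The other input value is handled by a symmetric argument.

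The argument will proceed in three steps. First, since $\mathcal{M}$ is not closed, I will pick a limit point $\alpha^*\in\overline{\mathcal{M}}\setminus\mathcal{M}$ together with a sequence $(\alpha_i)_{i\in\mathbb{N}}\subseteq\mathcal{M}$ converging to $\alpha^*$ in the metric, and restrict attention to $i$ large enough for $\alpha_i$ and $\alpha^*$ to share their initial configuration (which fixes $s$'s initial aliveness and input value $v$). Second, I will establish the key intermediate observation that in every $\mathcal{M}$-execution with $s$ initially alive, $d$'s deciding step \emph{must} occur strictly after $s$'s first step: since $A$ is deterministic and $d$'s local state cannot depend on $s$'s input before receiving any message from $s$, a decision made beforehand would be a fixed value for both possible inputs, violating \emph{Validity} for at least one of them. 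Writing $\tau_s^{(i)}$ and $\tau_d^{(i)}$ for the times of $s$'s first step and $d$'s deciding step in $\alpha_i$, this observation combined with the boundedness assumption gives $\tau_s^{(i)}<\tau_d^{(i)}\le k$.

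Third, I will invoke (C1) to conclude that every step of $\alpha_i$ at time $t>k$ is trivial and leaves the configuration unchanged, so $C_t^{(i)}=C_k^{(i)}$ for all $t\ge k$. Combining this with prefix agreement---$C_t^*=C_t^{(i)}$ for $t$ up to arbitrarily large values when $i$ is large enough---I will obtain $C_t^*=C_k^*$ for every $t\ge k$, so the tail of $\alpha^*$ after time $k$ is stationary and $\alpha^*$ is entirely determined by its prefix through time $k$. Since this prefix coincides with that of $\alpha_i$ for any sufficiently large $i$ and the two stationary tails both equal $C_k^*$, the limit collapses: $\alpha^*=\alpha_i\in\mathcal{M}$, yielding the contradiction.

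The hardest step will be the intermediate observation in the second stage. Without it, the adversary could delay $\tau_s^{(i)}$ arbitrarily far beyond $k$, placing a non-trivial send step of $s$ in the tail of $\alpha_i$ (and possibly of $\alpha^*$), which would break the stationarity needed to identify $\alpha^*$ with some $\alpha_i$. Once that observation is secured, (C1) and the metric convergence glue the remaining pieces together directly.
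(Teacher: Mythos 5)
Your first paragraph's plan for the witnessed input value matches the paper's argument, and your ``key intermediate observation'' in fact makes explicit a step the paper only asserts (the paper writes that the number of steps until $d$ receives $m$ is ``exactly'' the decision time, which needs precisely your Validity-based indistinguishability argument). One caveat: for the tail of $\alpha_i$ to be stationary after time $k$ you need more than $\tau_s^{(i)}<\tau_d^{(i)}\le k$ — you need $d$ to have \emph{received} $m$ by time $k$, since otherwise the later delivery of the in-transit message changes the message buffers, hence the configuration, after time $k$. Your own justification actually proves this stronger form ($d$ cannot decide before receiving a message from $s$, and by (C1) it decides no later than that receipt), so you should state and use that version rather than the weaker ``after $s$'s first step''.

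The genuine gap is the sentence ``The other input value is handled by a symmetric argument.'' It is not. Non-closedness of $\M$ hands you a single convergent sequence whose members, for large $i$, all carry one input value $v^*$; your contradiction therefore shows unboundedness of the decision time only for input $v^*$. For input $1-v^*$ you have no witness sequence, and none is guaranteed to exist: a limit point of $\M=\M_0\cup\M_1$ lying outside $\M$ need only be a limit point of one of the two pieces, so the input-$(1-v^*)$ part of $\M$ could perfectly well be closed and your contrapositive gives nothing there. The lemma claims unboundedness for \emph{both} inputs, and the proof of Theorem~\ref{thm:sdd} genuinely needs both. The paper closes this gap with a construction occupying the entire second half of its proof: from the unbounded sequence $(\alpha_k)$ with input $v^*$ it builds, by induction on time, executions $\alpha_k'$ with input $1-v^*$ and a ``similar'' step schedule (send/receive steps at the same times at the same processes), relying on (C1) ($s$ sends $m$ in its first step regardless of its input and is silent afterwards) and on the fact that $d$'s view is input-independent until it receives $m$; the similar schedules then transfer the unbounded decision times to the flipped input. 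Some version of this transfer argument is required; symmetry alone does not supply it.
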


\begin{proof}
Assume that $\mathcal{M}$ is not closed. 
Then, $\mathcal{M}$ does not contain at least one of its limit points, i.e., there is a converging sequence of
executions $(\alpha_k)_{k \in \N}$ such that 
\begin{align*}
\forall~k~\in~\N\colon
\alpha_k~\in~\mathcal{M}
\intertext{and}
\lim_{k\rightarrow\infty}\alpha_k\notin\mathcal{M}.
\end{align*}
Note that if a sequence $(\beta_k)_{k\in \N}$ is converging, this implies that
there exists an index $r'$ such that $s$ has the same input value in all
executions $\beta_n$ $(n\ge r')$. 
W.l.o.g., we can assume that $r'=0$ for sequence $(\alpha_k)_{k \in \N}$.
Suppose that process $s$ starts with input value $v$ in execution 
$\alpha_0$ (and also in all other executions in the sequence).
For the sake of a contradiction, assume that there exists an upper bound 
until reaching a deciding configuration and let $N$ be the least upper bound.
By assumption, $s$ is initially alive in every $\alpha_k$,
and thus, according to (C1), process $s$ sends $m$ to $d$ in its first step.
For each $\alpha_k$, we define $N_k$ to be the number of steps taken by $s$ and $d$ until $d$ receives $m$.
Clearly, $N_k$ is exactly the number of steps taken before a deciding
configuration is reached in $\alpha_k$ since, by (C1), $d$ must decide (at the latest) upon receiving $m$.
By assumption, $N$ is an upper bound on $N_k$ for all $k\ge 0$.
Consider execution 
\[ 
\alpha_\ell = (C_0,\dots,C_N,C_{N+1},\dots),
\]
where $N_\ell =
N$.
In other words, $C_N$ is a deciding configuration.
Note that (C1) implies that neither $s$ nor $d$ take any 
\emph{non-trivial} steps (cf.\ Section~\ref{sec:model}) after $d$ has
decided.
Thus, for all $j>N$, it holds that $C_j = C_N$, which means that
\[
\forall j > N\colon \alpha_j = \alpha_\ell.
\]
Hence, 
\[
\lim_{k\rightarrow\infty}\alpha_k = \alpha_\ell,
\]
and since $\alpha_\ell \in \mathcal{M}$, this
 yields a contradiction for the case where $s$ starts with input 
value $v$.

Now consider the case where $s$ starts with $v'=1-v$.
We need to argue that there exists a sequence $(\alpha_k')_{k \in \N}$ in 
$\M$ such that there is no upper bound on the decision time.
We now show how to construct $\alpha_k'$, given $\alpha_k$.
To this end, we will show by induction that we can define the step 
schedule of $\alpha_k'$ to be \emph{similar} as in $\alpha_k$, for any $k \in 
\N$, in the sense that a send (resp.\ receive) step occurs at time $t$ at
process $p$ in $\alpha_k$, for any choice of $p \in \{s,d\}$, if and only if a send (resp.\ receive)
step occurs at time $t$ at process $p$ in $\alpha_k'$.
This will imply that there is no upper bound on the decision time in the sequence $(\alpha_k')_{k \in \N}$.
Note that the actual configurations of executions $\alpha_k$ and $\alpha_k'$, 
however, are not necessarily the same.

By (C1), process $s$ sends $m$ in its first step, regardlessly of having input
value $v$ or $1-v$.
Let $T_k$ be the time when this happens in $\alpha_k$.
Observe that process $d$ has the same view in every execution until it 
receives a message from $s$.
Thus we can schedule the same type of step (send, receive, or local step) to
happen initially in $\alpha_k'$ as in $\alpha_k$.
This shows the induction base.

For the induction step, assume that we have defined similar schedules up to 
time $\tau$.
If $T_k\ge \tau$, i.e., $s$ has not taken any steps yet, we can argue the same 
way as in the induction base.
Now assume that $T_k<\tau$, i.e., process $s$ has sent a message $m'$ to $d$ in some 
previous step.
By (C1), process $s$ only takes trivial steps after $T_k$ and, in particular, 
does not send any other messages to $d$ later on.
Thus we can schedule either $s$ or $d$ to take a step in $\alpha_k'$, 
accordingly to $\alpha_k$, as required.
Moreover, we schedule process $d$ to receive message $m'$ in this step of 
$\alpha_k'$ if and only if $d$ receives $m$ in $\alpha_k$.

Since we have shown that there exists no upper bound on the decision time in 
$(\alpha_k)_{k \in \N}$ when $s$ starts with value $v$, it follows that the 
same is true for the sequence $(\alpha_k')_{k \in \N}$ when $s$ starts with 
value $1-v$.
\end{proof}

\begin{theorem} \label{thm:sdd}
    Let $A$ be an algorithm that solves the SDD problem in a model and
    let $\M\subseteq \ASYNC$ be the corresponding set of executions of $A$
    in this model.
    Then $\M$ is closed.
\end{theorem}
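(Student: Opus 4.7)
The plan is to argue by contradiction: suppose that $A$ solves SDD yet $\M$ is not closed, and derive a contradiction by combining Lemma~\ref{lem:bound} with the closedness of the validity property.

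First, I would reduce to the case where $A$ already satisfies condition (C1). As noted in the paragraph preceding Lemma~\ref{lem:bound}, any SDD-solving algorithm can be transformed into one adhering to (C1). One must verify that this transformation does not spuriously close a previously non-closed execution set: specifically, that if $\M$ is not closed, then the execution set $\M'$ of the transformed algorithm is also not closed. From this point on I work with the (C1)-conforming algorithm, still calling its execution set $\M$.

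Next, I invoke Lemma~\ref{lem:bound}. Since $\M$ is not closed, there is a convergent sequence $(\alpha_k)_{k\in\N}$ with $\alpha_k \in \M$ for all $k$ and limit $\alpha_\infty \in \ASYNC \setminus \M$. The proof of Lemma~\ref{lem:bound} in fact shows that the decision times $N_k$ of $d$ along this sequence cannot be bounded, so after passing to a subsequence I may assume $N_k \to \infty$. By the lemma's hypotheses, this sequence can be chosen so that $s$ is initially alive with some fixed input value.

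I then examine $\alpha_\infty$. By (C1), process $s$ sends the message $m$ in its very first step in every $\alpha_k$, so $s$ is not initially crashed in any $\alpha_k$; pointwise convergence of configurations then ensures $s$ is not initially crashed in $\alpha_\infty$ either. On the other hand, since $d$ decides only at step $N_k$ in $\alpha_k$ and $N_k \to \infty$, for every finite index $i$ the configurations at index $i$ show $d$ as not yet decided for all sufficiently large $k$; as $\alpha_\infty$ is the pointwise limit, $d$ never decides in $\alpha_\infty$. Hence $\alpha_\infty$ violates validity, i.e., $\alpha_\infty \notin V$. However, the earlier lemma identifying safety properties with closed sets tells us that $V$ is closed, and each $\alpha_k \in \M \subseteq V$, so $\alpha_\infty \in V$ --- the desired contradiction.

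The step I expect to be the main obstacle is the initial reduction to (C1): rigorously verifying that non-closedness of $\M$ transfers to the execution set of the transformed algorithm requires a careful tracking of how the transformation relates individual executions and their metric distances, and in particular how it maps the witnessing convergent sequence and its limit point.
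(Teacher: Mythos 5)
Your argument breaks down at the final step, where you claim the limit execution $\alpha_\infty$ violates Validity and hence contradicts the closedness of $V$. What you have actually established about $\alpha_\infty$ is that $s$ is not initially crashed and $d$ \emph{never decides}. For $V$ to be a closed set (which is what the earlier lemma on safety properties gives you, and what your contradiction relies on), Validity must be read as a pure safety property: ``$d$ never decides a value different from $s$'s initial value when $s$ is initially alive.'' Under that reading, an execution in which $d$ never decides satisfies $V$ vacuously; what $\alpha_\infty$ violates is Termination, which the paper shows corresponds to a \emph{dense} set, not a closed one, so the limit of a sequence in $T$ escaping $T$ is no contradiction at all. (If you instead read Validity as obligating $d$ to decide, then $V$ is no longer closed and the lemma you invoke does not apply.) Indeed, this is exactly the situation of the GST model discussed in the paper: a non-closed execution set whose missing limit points violate only liveness, which is perfectly consistent. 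So non-closedness plus a non-terminating limit point cannot by itself yield a contradiction; the contradiction has to be located inside $\M$, in an actual admissible execution.

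That is what the paper's proof does, and it is the idea your proposal is missing. From Lemma~\ref{lem:bound} (unbounded decision times when $s$ is initially alive, for \emph{both} input values) the paper extracts executions $\alpha_v'$ ($v\in\{0,1\}$) in which $s$ is alive but $d$ decides \emph{before} receiving $s$'s message, at a time later than the decision time $k$ in the corresponding executions $\alpha_v$ where $s$ crashes initially. By (C1) and Validity, $d$ must decide $v$ in both $\alpha_v$ and $\alpha_v'$; but $\alpha_0$ and $\alpha_1$ are indistinguishable to $d$ up to time $k$ (it has heard nothing from $s$ in either), so $d$ decides the same value in both --- a genuine violation of Validity in an execution of $\M$. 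Your observation about the (C1) reduction being a point needing care is fair (the theorem statement does not assume (C1) while Lemma~\ref{lem:bound} does), but the decisive missing ingredient is this cross-input indistinguishability argument, not the reduction.
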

\begin{proof}
    If $A$ solves the SDD problem in model $\M$, then 
    \[
    \mathcal{M} \subseteq I \cap V \cap T
    \]
    That is, it must be that
    \[
    \mathcal{M} = I \cap V \cap T \cap 
    \mathcal{M},
    \]
    since, otherwise, $\mathcal{M}$ would contain an execution 
    \[
    \gamma \in 
    \ASYNC \setminus \left(I \cap V \cap T \right),
    \]
     contradicting the correctness of $A$.
Recalling from Property~\ref{prop:union} that (finite) intersections of closed sets are closed and since $I$ and $V$ are both safety properties,
it follows that $I\cap V$ is closed too.
    Thus, we are
    done if we can show that $T\cap \mathcal{M} = \mathcal{M}$ is closed
    too. 

    Now assume in contradiction that $\mathcal{M}$ is not closed.  
    Consider an execution $\alpha_0$ where $s$ has an input value of $0$ 
    and
    crashes initially.
    Since $A$ solves the SDD problem, $d$ eventually 
    decides in $\alpha_0$ after some time $k$. By Lemma~\ref{lem:bound} there is an
    execution ${\alpha_0}'$ where $s$ has an input value of $0$, is initially
    alive and, since there is no upper bound on the decision time of $d$,
    we can assume that process $d$ decides at some time $k' > k$.
    By (C1), it follows that $d$ has not received any message in 
    $\alpha_0'$ from $s$ before $k'$ and thus process $d$ has the same
    view in ${\alpha_0}'$ as in $\alpha_0$ up to time $k$; by validity, 
    $d$ must decide on $0$ in both executions.

    Now consider the execution $\alpha_1$ where $s$ has an input value of $1$
    but initially crashes and $d$ decides at some time $k$. Again, by using
    Lemma~\ref{lem:bound}, there is an execution ${\alpha_1}'$, where $s$ is
    initially alive and $d$ receives $m$ at some time $k' >k$. By the same
    reasoning as above, $d$ must decide on $1$ in $\alpha_1$ and ${\alpha_1}'$.
    For process $d$, execution $\alpha_1$ is indistinguishible from $\alpha_0$
    up to time $k$,
    so $d$ decides on the same value in $\alpha_1$ and $\alpha_0$,
    which is a contradiction.  %
\end{proof}

\section{Solvability with Failure Detectors} \label{sec:fd}

Failure detectors have been studied extensively in the quest to understand the impact of asynchrony and faults on the solvability power of distributed systems, e.g., see \cite{CHT96, CT96}.
 
In \cite{CGS00}, it was shown that the SDD problem cannot be solved
     in the asynchronous system $\ASYNC$ equipped with the perfect failure 
     detector $\FD{P}$
     (cf.\ \cite{CT96}). 
This result stands in stark contrast to the fact that, in a synchronous system, which is strong enough to implement $\FD{P}$, the SDD problem \emph{can} be solved! 
So far, the question whether there is any failure detector that is strong enough to solve the SDD problem, and if yes, what is the weakest one to do so, remained open. In this section we will close this gap in literature.

\paragraph{{Failure Detectors}}In the context of failure detectors one important notion is that of a
failure pattern, which we now introduce.  For $t \in \mathbb{T}$, the 
\emph{failure pattern} $F(t)$ denotes the
     set of processes that have crashed up to and including time $t$. 
It is important to remember that if $p$ is in $F(t)$ but was not in
     $F(t')$ ($t'<t$) then this does not mean that $p$ takes a step
     between $t'$ and $t$.
Turning to the specific problem at hand, we recall that for the
     Validity property, it is important whether the source $s$ crashes
     initially or not. 
One way to understand ``initial crash of $s$'' in the context of
     failure detectors is that there is no point in time where $s$ is
     not faulty, i.e.,
\begin{equation}\forall t\in\Time\colon s\in F(t).\label{eq:ic:fd}\end{equation}

Another interpretation is that if $s$ crashes initially, then it
     takes no steps. Given the above timebase we can define
     $\Time_p\subseteq\Time$ to be those points in time where $p$
     takes a step. Then the second interpretation becomes: 
\begin{equation}\Time_s=\emptyset.\label{eq:ic:tb}\end{equation}

While Definition (\ref{eq:ic:fd}) is based purely on the failure pattern and is
     therefore well suited for FDs, definition (\ref{eq:ic:tb}) captures the
     intuitive notion that when a process crashes before doing a
     single step (and is therefore unable to leave its initial state)
     it should be considered initially crashed. 
In the following we will show that for both definitions above there is
     no algorithm that solves the SDD problem in the asynchronous
     model augmented with a failure detector. 
In order to do so, we assume there is an algorithm $A$ solving the SDD
     problem in the asynchronous model augmented with some FD
     $\mathcal{D}$.

We consider (\ref{eq:ic:fd}) first and assume executions $\alpha_v$
     $v\in\{0,1\}$, with a unique $t_c>0$ such that
     process $s$ crashes at $t_c$ and does not take a step before
     $t_c$. 
The two executions are assumed to be identical (step times,
     failure pattern, and FD history), except that in $\alpha_v$ $s$
     has initial value $v$. 
Due to Termination, $d$ has to decide on some
     value $v\in\{0,1\}$ at some time. 
Since the failure detector history can---by definition---only depend
     on the failure pattern, and $d$ queries $\mathcal{D}$ at the same
     times in both, it follows that process $d$ cannot distinguish the
     two execution and thus decides at the same time $t_d$ and the
     same value $w$ in both executions. 
Now assume another execution $\alpha'$ in which $s$ actually performs
     a step before $t_c$, any message it sends is delayed until after
     $t_d$ and that is otherwise (step times, failure pattern, FD
     history and initial value) the same as $\alpha_{1-w}$. 
Clearly in $\alpha'$ $s$ does \emph{not} crash initially, so $d$ has to decide on $1-w$. 
But since up to $t_c$ the execution is indistinguishable from
     $\alpha'$ and $\alpha$, $d$ once again decides on $w$,
     thereby violating Validity.

Now consider case (\ref{eq:ic:tb}):
We start by considering executions $\beta_v$ (for $v\in\{0,1\}$) in
     which $s$ has initial value $v$, process $s$ does not take any
     steps, and that have a common failure pattern such that  
     \[
     \exists
     t_c>0:\ \forall t<t_c:\ s\not\in F(t). 
     \]
Moreover, assume that the step times of $d$ are equivalent in
     $\beta_0$  and $\beta_1$. 
Clearly, in both executions the system's behavior is such that $s$
     does not crash initially, according to case (\ref{eq:ic:tb}). 
Since both executions share the same failure pattern, we can assume
     they also share the same failure detector history, thus process
     $d$ cannot distinguish between the two executions. 
Since due to Termination, $d$ has to decide eventually, it must decide
     the same way in both executions, thus violating Validity and
     leading to a contradiction to Validity for case (\ref{eq:ic:tb}) as well. 
Note that our argument holds for \emph{any} failure detector, i.e., we
     have shown the following result:
\begin{theorem}\label{thm:SDD-FD-impos}
There is no algorithm that solves the Strongly Dependent Decision Problem 
in the asynchronous model augmented with any failure detector $\D$.
\end{theorem}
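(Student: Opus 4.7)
The plan is to suppose, for contradiction, that some algorithm $A$ solves SDD in $\ASYNC$ augmented with an arbitrary failure detector $\D$, and to derive a Validity violation via an indistinguishability argument that handles both interpretations of ``$s$ has initially crashed'' given in \eqref{eq:ic:fd} and \eqref{eq:ic:tb}. The single observation that powers the whole proof is that the history of $\D$ is a function of the failure pattern only, so whenever two executions share a failure pattern and the same step times at $d$, they may be equipped with identical FD histories; this forces $d$'s local state---and therefore its decision---to evolve identically in both.

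For interpretation \eqref{eq:ic:fd}, I would fix a failure pattern in which $s$ crashes at some $t_c>0$ with $s\notin F(t)$ for $t<t_c$ and, for each $v\in\{0,1\}$, build an execution $\alpha_v$ in which $s$ starts with value $v$ and takes no step; by the common observation, $d$ decides some joint value $w$ at a joint time $t_d$ in both $\alpha_0$ and $\alpha_1$. Since $s\notin F(0)$ in $\alpha_{1-w}$, process $s$ has \emph{not} initially crashed in the sense of \eqref{eq:ic:fd}, so Validity demands that $d$ decide $1-w$, a contradiction. To close any objection that ``$s$ never took a step'' might itself count as an initial crash under this reading, I would reinforce the argument with an auxiliary execution $\alpha'$ matching $\alpha_{1-w}$ in step times, failure pattern, FD history, and initial value, but in which $s$ actually performs its first step before $t_c$ with every message it sends delayed past $t_d$; $d$'s view up to $t_d$ is unchanged, so $d$ still decides $w$, while $s$ is now unambiguously active initially, sharpening the Validity violation.

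For interpretation \eqref{eq:ic:tb}, the construction is analogous: I would take executions $\beta_0,\beta_1$ in which $s$ starts with value $v$ and $\Time_s=\emptyset$, sharing a common failure pattern with $s\notin F(t)$ for $t<t_c$ and identical step times at $d$. The identical FD histories again force $d$ to decide the same value in both, contradicting Validity. The principal subtlety---and the main obstacle---is the alignment of the two notions of ``initially crashed'': failure detectors see only the failure pattern, while \eqref{eq:ic:tb} is a statement about scheduling, so each constructed execution must be engineered to land on the ``not initially crashed'' side of whichever definition is in force. Once that alignment is secured, the argument is a short indistinguishability chain that uses no property of $\D$ beyond its dependence on the failure pattern, so the conclusion holds uniformly over all failure detectors.
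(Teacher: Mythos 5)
Your proposal follows essentially the same route as the paper's proof: the same indistinguishability argument resting on the fact that the failure-detector history depends only on the failure pattern, the same pair $\alpha_0,\alpha_1$ reinforced by the auxiliary execution $\alpha'$ in which $s$ takes a step whose messages are delayed past $t_d$ for the definition in \eqref{eq:ic:fd}, and the same $\beta_0,\beta_1$ construction with identical step times at $d$ for the definition in \eqref{eq:ic:tb}. The only difference is presentational: you explicitly flag the need to place each constructed execution on the ``not initially crashed'' side of whichever definition is in force, a point the paper's treatment of case \eqref{eq:ic:tb} handles more tersely.
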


\section{Conclusion}
We have analyzed the strongly dependent decision problem from a topological angle, which allowed us to succinctly capture the necessary properties of message passing models where the problem is solvable. We believe that a similar approach can be useful for characterizing the properties of system models for other problems in the context of fault-tolerant distributed system.

\providecommand{\bysame}{\leavevmode\hbox to3em{\hrulefill}\thinspace}
\providecommand{\MR}{\relax\ifhmode\unskip\space\fi MR }
\providecommand{\MRhref}[2]{%
  \href{http://www.ams.org/mathscinet-getitem?mr=#1}{#2}
}
\providecommand{\href}[2]{#2}

\end{document}